\def\E{{\mathbb{E}}}
\def\P{{\mathbb{P}}}
\def \I{{\mathbb{I}}}
\def \F{{\mathcal{F}}}
\def \Ij{ {\mathbb{I}}_{\{ t_j < \theta \}}  }
\newcommand{\bearno}{\begin{eqnarray*}}
\newcommand{\enarno}{\end{eqnarray*}}
 \def\Xi{X^{\infty}}
\def\P{{\mathbb P}}   
\def\E{{\mathbb E}}
\newtheorem{Thm}{Theorem}
\newtheorem{Lemma}{Lemma}
\title{Pricing Time-Capped American Options Using Least Squares Monte Carlo Method}
\author{Pawe\l\ St\c{e}pniak}
\address{Faculty of Pure and Applied Mathematics, Wroc\l aw University of Science and Technology, Wroc\l aw, Poland} \email{pawel.stepniak@pwr.edu.pl}
\author{Zbigniew Palmowski}
\address{Faculty of Pure and Applied Mathematics, Wroc\l aw University of Science and Technology, Wroc\l aw, Poland} \email{zbigniew.palmowski@pwr.edu.pl}
\thanks{
This work is partially supported by National Science Centre, Poland, under grants
No. 2021/41/B/HS4/00599.}
\date{\today}
\keywords{}
\begin{document}
\begin{abstract}
In this paper, we adopt the least squares Monte Carlo (LSMC) method to price time-capped American options. The aforementioned
cap can be an independent random variable or dependent on asset price at random time. We allow various time caps. In particular, we give an algorithm for pricing the American options capped by
the first drawdown epoch. We focus on the geometric L\'evy market.
We prove that our estimator converges to the true price as one takes the discretisation step tending to zero and the number of trajectories going to infinity.
\vspace{3mm}

\noindent {\sc Keywords.}
L\'evy process $\star$ LSMC algorithm $\star$ pricing $\star$ American option $\star$ optimal stopping $\star$ Monte Carlo $\star$ cap

\noindent {\sc Key messages.}
\begin{itemize}
\item Creating a modified least squares Monte Carlo pricing method.
\item Pricing a time-capped option.
\item Performing a numerical analysis.
\end{itemize}

\end{abstract}

\maketitle

\pagestyle{myheadings} \markboth{\sc P.\ St\c{e}pniak --- Z.\ Palmowski} {\sc Time-capped American options}

\vspace{1.8cm}

\tableofcontents

\newpage

\section{Introduction}
There has been much recent interest in mathematical finance extending
classical American options to more general options that
could be exercised early but have limited risk.
One of the most celebrated derivatives of this type are so-called capped options.
The cap could be placed either on the asset value or on the maturity time.
Moreover, the cap could be deterministic or random.
The motivation for introducing capped options
comes from limiting the liability and therefore making such a product more interesting for possible investors.
Furthermore, these derivatives are also cheaper than their vanilla counterparts.
Popularity of this type of financial instrument strongly depends though on
understanding their pricing, hedging, and optimal exercise policies.
This will to understand these features was our motivation for this paper.

The most common choice is a cap put on the asset price.
One of the simplest examples of capped options are introduced in 1991 by the Chicago Board of Options Exchange European options written on the S\&P 100 and S\&P 500 with a cap on their payoff function (see \cite{cap}).
In other words: they were terminated if the index value exceeded a given threshold. For other Papers related to this type of cap, see \cite{Zaevski} and references therein.

Our paper focuses on the time cap instead, which is put on the maturity date.
Such derivatives are terminated early if a prespecified event occurs.
Usually, the default time of some company is chosen as a time-cap; see e.g. \cite[p. 27]{BieleckiRutkowski} or \cite{Mladen}.
Sometimes the time cap is chosen to be unobservable; see \cite{gapeev1}.

Other possibility is related to independent of the asset price process time-cap with exponential or Erlang distribution.
This type of cap was considered, for example, in
\cite{carr} and was used to estimate the price of the American put option by randomising the time to maturity.
This method is called the Canadian approximation and was analysed in detail in \cite{Florin}.
The value of an 'n-Erlangian' option may be computed via an iterative method
with $n$ stages. When $n\rightarrow+\infty$, this approximation converges to the true value.

Time-cap is directly related to cancellable options, which are terminated early when
some event happens. In fact, the price of the stopped option is a sum of the cancellable option and the
discounted pay-out (under risk-neutral measure) at event time if it happens before maturity.
Usually this event is described as the first or last time when the underlying asset price
hits a specific threshold; see, e.g., \cite{1st_pass, algo} and references therein.

Similar early termination features have game options (like, e.g., Israeli options) where a seller has a right to terminate early at the cost of fixed penalty paid to the buyer; see the seminal paper of \cite{kifer} and \cite{meyer, Zaevski2, Zaevski2b}.

There are works dealing with derivatives that are very similar to the ones analysed in this paper.
Egloff, Farkas and Leippold in \cite{time_constraints} price American options with
stochastic stopping time constraints where the constraint put on the exercise rule allows one to exercise the option only when some prespecified conditions are satisfied. These conditions are expressed in terms of the states of a Markov process and can be related to a stochastic performance condition.
As the authors noted, such
performance-based constraints not only play an important role for structuring new
investment products, but also for the design of executive stock option plans with
exercise constraints based, e.g., on the out-performance of a reference index.
To some extent, our motivation for considering a time-capped option is very similar:
we want to introduce the option that runs up to either maturity or some event described above, whatever happens first.
However, the difference in pricing is crucial. In \cite{time_constraints} the buyer can exercise an option only when some predetermined condition, associated with the performance of the underlying asset, is satisfied. Therefore, it was possible to transform the constrained pricing problem into an unconstrained optimal stopping problem that corresponds to a generalised barrier option pricing and a stochastic Cauchy-Dirichlet problem.

There are also some papers that analyse the time-cap of American options.
In \cite{trabelsi} the random is a~first hitting time of a fixed barrier by
the underlying asset price. In \cite{ott} the fixed time cap is analysed for
Russian and American look-back options. Finally, there are other related works.
For example, \cite{russian, random_put} analyse Russian options terminated when the stock price hits its running maximum for the last time, and American options terminated when the stock price hits a certain pre-specified level for the last time.

Our main goal is to create a general numerical method of pricing time-capped options.  For this purpose we chose the Least Squares Monte Carlo (LSMC) method proposed by Longstaff and Schwartz in \cite{lsmc}. The main reason for this choice is its robustness to model choice.
Additionally,  it does not fail due to the so-called curse of dimensionality, in contrast to, e.g., finite-difference method or binomial trees.
LSMC algorithm is based on an approximate estimate of the expected value of holding the option conditional on the underlying asset price history using a linear combination of the basis polynomials from a chosen orthogonal space. Under the assumption that the holder of the option should exercise it as soon as it is profitable, this procedure allows one to find the stopping time for each of the simulated underlying asset price trajectories. The convergence of this method has been proved by Stentoft \cite{stentoft} and Clement, Lamberton and Protter \cite{proof}.

In order to price options with the time cap, we introduce a modified LSMC method. It allows not only to price a new class of instruments, but it is also suitable for options written on the underlying asset described by the geometric L\'evy process.
Such choice of the asset price process allows us to better reflect the market behaviour, for instance, the market crashes.

Indeed, several empirical studies show that the log prices of stocks have a heavier left tail than the normal distribution, on which the seminal Black-Scholes model is founded.
Since the seminal paper of Merton \cite{Merton} who introduced the geometric L\'evy process, many papers have used them (see
\cite{Cont, Schoutens} for an overview and references). In particular,
the following evolutions of the stock price process were considered:
the normal inverse Gaussian model (see \cite{B10}), the hyperbolic
model (see \cite{B42}), the variance gamma model
(see \cite{B80}), the CGMY model (see \cite{B24}),
the tempered stable process (see
\cite{B68, boyarchenkolevendorskii}.
The L\'evy markets appear in
pricing of American options as well; see, e.g.
\cite{Aase2010, AliliKyprianou, AsmussenAvramPistorius, Erik2, Erik1, boyarchenkolevendorskii, carr_wu, Mordecki}, and references therein.

To use the LSMC method, we adapt the method of \cite{proof} and prove the convergence of our algorithm when the discrimination time step tends to zero and the number of Monte Carlo simulations of the underlying asset process approaches infinity. Finally, we conduct a detailed numerical analysis for the different types of time caps.

Our main example that we will focus on is the cap related to the first time when the drawdown of the asset price will be bigger than a fixed level.
By the drawdown of a price process, we mean here the distance of the current value away from its maximum value it has attained to date.
The protection of the seller against large drawdowns written explicitly in the financial contract is very natural
to minimise the possible loss. Therefore, the list of papers pricing contracts with drawdown or drawup feature is quite long; see,
e.g. \cite{CZH, GZ, MA, ZP&JT2, ZP&JT, drawdownup2, PV, Sorn, Vec1,Vec2, drawdownup1, olympia}
and references therein.

This paper is organised as follows. In the next section we present the market in which we price the capped American option.
%{sec:market}
In Section \ref{sec:alg} we present the algorithm used and in Section
\ref{sec:conv} we prove its convergence to the true price of the option considered.
Finally, in Section \ref{sec:num} we present a numerical analysis concentrating on the cap determined by the drawdown event.

\section{Market setup}\label{sec:market}
Let us assume a L\'evy market in which the asset price is described by the following process:
\begin{equation}\label{S_def}
S_t = e^{X_t},
\end{equation}
where $X_t$ is a spectrally negative L\'evy process and $s = S_0$ is an initial asset price. More specifically, we choose
\begin{equation}\label{X_def}
	X_t = x + \mu t + \sigma B_t  - \sum_{k=1}^{N_t}U_k, %\quad \forall_{k}\ U_k \geq 0,
\end{equation}
where $x = X_0 = \log s$ and $\sigma \geq 0$. In (\ref{X_def}) $\mu$ is a fixed drift, $B_t$ is a Brownian motion, $N_t$ is a homogeneous Poisson process with intensity $\lambda$ and $\{U_k\}_{\{k\in \mathbb{N}\}}$ is a sequence of independent identically distributed random variables with a finite second moment. We assume that $B_t$, $N_t$ and $\{U_k\}_{\{k\in \mathbb{N}\}}$ are mutually independent. We introduce a random variable $\theta$ and we extend the probability space, where process $X_t$ is constructed, to have both random objects defined on the same filtered probability space $(\Omega, \F, \{\mathcal{F}_t\}_{\{t\geq 0\}}, \P)$.
We allow $\lambda = 0$, which leads to the market described by the standard Black-Scholes model. Additionally, for simplicity, we assume that no dividend is paid to the holders of the underlying asset.

The main objective of this paper is to propose a general numerical method of pricing time-capped American options. More formally, our goal is to calculate the following value function:
\begin{equation}\label{define_V}
	V_s = \sup_{\tau \in \mathcal{T}, \tau \leq T} \E[e^{-r{\tau\wedge\theta}} G(S_{\tau\wedge\theta}) |S_0=s],
\end{equation}
where $G(\cdot)$ is the payout function, $K$ is a strike price, T is maturity date, $r$ is a constant positive risk-free interest rate, $\mathcal{T}$ is a family of stopping times and $\theta$ is the time-cap, i.e. the moment at which an event triggering termination of the option occurs.
We assume that $\theta$ is a stopping time with respect to $\{\F_t\}_{\{t \geq 0\}}$. The most common examples of the payout function are $G(S) = (K - S)^+$ for the put option and $G(S) = (S - K)^+$ for the call option. In this paper, we limit our attention to square-integrable payoff functions. Thus, the payoff function belongs to a Hilbert space. The expectation in (\ref{define_V}) is calculated with respect to the martingale measure $\P$, i.e $e^{-rt}S_t$ is a $\P$ local martingale. The choice of the jump-diffusion model as in (\ref{X_def}) results in a loss of completeness of the market; see \cite{Cont}. Therefore, the uniqueness of the equivalent martingale measure is not ensured. However, any martingale measure can be chosen and the price of the option will not be affected.

Let us define a Laplace exponent of the process $X_t$ as
\begin{equation}
	\Psi(z) = \frac{1}{t}\log\E e^{z X_t}.
\end{equation}

For the process $X_t$ defined as in (\ref{X_def}) we have
\begin{equation}\label{Psidef}
	\Psi(z) = \mu z + \frac{\sigma^2 z^2}{2} + \lambda (\eta(z) - 1),
\end{equation}
where
\begin{equation}\label{E_U}
	\eta(z) = \E e^{-z U_1}.
\end{equation}
We are only interested in this distribution of $U_1$ for which $\eta(1)$ is finite. As $e^{-rt}S_t$ is a local martingale under $\P$, then
\begin{equation}\label{Psi1}
	\Psi(1) = r.
\end{equation}
In consequence
\begin{equation}\label{martmeasure}
	\mu = r - \frac{\sigma^2}{2} +  \lambda (1 - \eta(1)).
\end{equation}

\section{Algorithm modification}\label{sec:alg}

American options give their holders the right to exercise them at any time before the maturity time T. To approximate their fair market price using numerical methods, it is necessary to discretise the possible exercise moments, that is, to choose for some $L$ a sequence $\{ t_k \}_{k=0}^L$ such that $0 = t_0 < t_1 < \ldots < t_L =T$. Then it is assumed that the option can only be exercised at time $t_k$ for any $0 \leq k \leq L$. Such discretisation transforms the American option into the so-called Bermuda option, whose price converges to the American option price for $L \to \infty$.

This section is organised as follows: first, we explain the original algorithm, which was designed for the underlying asset described by the geometric Brownian motion. Such assumption is one of the key elements of the Black-Scholes model. We can turn our exponential L\'evy process into GBM by setting $\lambda = 0$.

In subsection \ref{mod_lsmc}, we present our modified version of the algorithm, which not only allows to price time-capped options, but is also suitable for general class of spectrally negative L\'evy processes.

\subsection{Original algorithm}

The Least Squares Monte Carlo method proposed by Longstaff and Schwartz is a dynamic programming framework for pricing American options, see \cite{lsmc}. It begins by simulating the $N$ trajectories of the underlying asset price
process for some sufficiently large $N$. Each trajectory is a realisation of a geometric Brownian motion. Then, for each trajectory, the payoff at the maturity time is calculated. Then a recursive procedure starts. For each time step $t_k,\ 0 \leq k < L$, and for each trajectory, the expected value of the option payout conditional on the current underlying asset price $V_{t_{i}}$ is approximated. It allows one to determine if it is more profitable to hold the option instead of exercising it. Thus, for each trajectory, the optimal stopping time is found. Now, since the payoff function belongs to a Hilbert space, we can present the conditional expectation of the price function as a countable linear combination of some basis functions $\phi(\cdot)$ of this space, i.e.:

\begin{equation}\label{lin_comb}
	\E [e^{-r(t_{i+1}-t_i)}V_{t_{i+1}} | \F_{t_i}] = \sum_{k=0}^{\infty}\alpha_k\phi_k (S_{t_i}),
\end{equation}
for some unknown a priori coefficients $a_k$.  Usual choice of the basis assumes that
\begin{enumerate}
	\item for all $1 \leq j \leq L-1$ the sequence $(\phi_k(S_{t_j}))_{k \geq 0}$ is total in $L^2(\sigma(S_{t_j}))$,
	\item for all $0 \leq j \leq L-1$ and $M \geq 1$, if $\sum_{k=0}^{M}\eta_k\phi_k (S_{t_j}) = 0 $ almost surely then $\eta_k = 0$ for all $1 \leq k \leq M$.
\end{enumerate}
In \cite{lsmc}, the authors choose weighted Laguerre polynomials for the sake of an example. They also recommend Hermite, Legendre, Chebyshev, Gegenbauer or Jacobi polynomials as the possible alternatives. However, other basis functions are also possible, as long as they consistent with the assumptions.
Then, the approximation is calculated taking the finite number $M$ of basis functions and finding the estimators $\alpha^*_k$ of the coefficients $\alpha_k$ minimising the following error in the least squares sense:
	\begin{equation}\label{norm}
	\left\| \sum_{j=1}^{M} \alpha^*_j(t_i) \phi_j(\boldsymbol{S}_{t_i}) - e^{-r(t_{i+1}-t_i)}\boldsymbol{V}_{t_{i+1}} \right\|,
\end{equation}
where $\boldsymbol{V}_t$ and $\boldsymbol{S}_t$ denote the $N$-dimensional vectors of option and underlying asset prices at time $t$ consisting of observations from each trajectory.

Observe, that in fact this minimisation problem is equivalent to the one in linear regression in a matrix form.  
 Therefore, coefficients $a^{*}_j(t_i)$ can be calculated by solving the following equation:
\begin{equation}\label{matrix}
	{ \renewcommand\arraystretch{2.7}
		\begin{bmatrix} 
			a^{*}_1(t_i) \\ a^{*}_2(t_i) \\ \vdots \\ a^{*}_M(t_i) 
		\end{bmatrix} = 			
		\begin{bmatrix} 
			\displaystyle\sum_{k=1}^{N} \phi_1(S_{t_i}^k)\phi_1(S_{t_i}^k) && \ldots && \displaystyle\sum_{k=1}^{N}\phi_1(S_{t_i}^k)\phi_M(S_{t_i}^k)  \\ 
			
			\displaystyle\sum_{k=1}^{N} \phi_2(S_{t_i}^k)\phi_1(S_{t_i}^k) && \ldots && \displaystyle\sum_{k=1}^{N}\phi_2(S_{t_i}^k)\phi_M(S_{t_i}^k)   \\ 
			
			\vdots  && \ddots && \vdots \\
			
			\displaystyle\sum_{k=1}^{N} \phi_M(S_{t_i}^k)\phi_1(S_{t_i}^k) && \ldots && \displaystyle\sum_{k=1}^{N}\phi_M(S_{t_i}^k)\phi_M(S_{t_i}^k) 
		\end{bmatrix}^{-1}
		\begin{bmatrix} 
			\displaystyle\sum_{k=1}^{N}\text{e}^{-r\delta_{t_i}} \phi_1(S_{t_i}^k)V_{t_{i+1}}^k  \\ 
			\displaystyle\sum_{k=1}^{N}\text{e}^{-r\delta_{t_i}} \phi_2(S_{t_i}^k)V_{t_{i+1}}^k  \\ \vdots 
			\\ \displaystyle\sum_{k=1}^{N}\text{e}^{-r\delta_{t_i}} \phi_M(S_{t_i}^k)V_{t_{i+1}}^k 
		\end{bmatrix},
	}
\end{equation}
where $S_{t_{i}}^k$ is the price of the underlying asset at the moment $t_i$ for $k$-th trajectory out of $N$, $V_{t_{i+1}}^k$ is the value of the option at the moment $t_{i+1}$ for $k$-th trajectory and $\delta_{t_i} = t_{i+1}-t_i$.

In more details, the algorithm looks as follows:
\begin{enumerate}
	\item For a chosen time discretisation $0 = t_0 < t_1 < \ldots < t_L =T$ generate N trajectories of the underlying asset price process (geometric Brownian motion).
	\item Set the estimated value of the option for each trajectory at the maturity time to $V^{*k}_T = G(S_T^k), k = 1, \ldots, N$, where $S_T^k$ denotes price of the underlying asset at time $T$ for $k$-th trajectory.
	\item Find coefficients $a_j^*(t_{K-1})$ by minimizing the norm (\ref{norm})
	\item For each trajectory update the value function using the formula:
	\begin{equation}
	V^{*k}_{t_{i-1}} = \begin{cases}
		G(S^k_{t_i})\quad \text{if}\ G(S^k_{t_i}) \geq \sum_{j=0}^{M} \alpha^*_j(t_i) \phi_j(S^k_{t_i})\\
		e^{-r(t_{i+1}-t_i)}V^{*k}_{t_{i}}\quad \text{otherwise.}
	\end{cases}
	\end{equation}
	\item repeat steps (3) and (4) until $\boldsymbol{V}_0$ is reached.
	\item Calculate the price of the option by taking the average of $\boldsymbol{V}_0$ vector.
\end{enumerate}

\subsection{Modified approach}\label{mod_lsmc}
We propose a modified LSMC method suitable for pricing a wider range of financial instruments. We are interested in pricing time-capped American options. These contracts can be exercised at any time up to the maturity time T or a random time $\theta$, whichever comes first. This random time $\theta$ is the first time when a pre-determined event, which terminates the contract, happens. Such event does not have to be associated directly with the underlying asset performance and might as well be modelled by a chosen random variable independent of other stochastic processes driving the model.

Our procedure is as follows: we start the algorithm by discretising the time, i.e. choosing $\{ t_k \}_{k=0}^L$ such that $0 = t_0 < t_1 < \ldots < t_L =T$.
We restrict the values of the stopping rules only to this nest times.
Then we simulate N independent trajectories of the underlying asset prices, driven by the exponential L\'evy process. Then, for each trajectory we either simulate the random time $\theta$, if it is independent of the underlying asset price, or check if the performance of the underlying led to $\theta < T$ for that trajectory.

In the next step, the recursive procedure starts. The procedure at the terminal time $T$ is unchanged towards the standard method: for each trajectory we set $V_T = G(S_T)$.
Then, for each time step $t_j,\ 0 \leq j < T$, the procedure of estimating the expected value $V_{t_{i}}$ of continuation must take into account the stopping time $\theta$. To do so we use counterpart of equation (\ref{lin_comb}) and write:
\begin{equation}\label{ind_comb}
	\E [e^{-r(t_{i+1}-t_i)}V_{t_{i+1}} \I_{\{ t_i < \theta \}} | \F_{t_i}] = \I_{\{ t_i < \theta \}} \sum_{k=0}^{\infty}\alpha_k\phi_k (S_{t_i}),
\end{equation}
where $\I_A $ is an indicator of an event $A$.
Note that with the representation of the conditional expected value of continuation given by (\ref{ind_comb}) is the same as in (\ref{lin_comb}) until time $\theta$ is reached. Then, at all time steps from $\theta$ to T, the expected value of continuation is equal to zero due to the indicator $\I\{ t_i \neq \theta \}$. Due to this, the algorithm assures that the option can be only exercised up to the moment when it is capped.
This method does not change the procedure of estimating coefficients $\alpha_j^*$ and to do that one still needs to minimise the error (\ref{norm}).

To sum up, the modified procedure looks as follows:
\begin{enumerate}
	\item For a chosen time discretisation $0 = t_0 < t_1 < \ldots < t_L =T$ generate N trajectories of the underlying asset price process (geometric L\'evy process).
	\item for each trajectory find $t_k = \theta$ provided that $\theta < T$.
	\item Set the estimated value of the option for each trajectory at the maturity time to $V^{*k}_T = G(S_T^k), k = 1, \ldots, N$, where $S_T^k$ denotes price of the underlying asset at time $T$ for $k$-th trajectory.
	\item Find coefficients $a_j^*(t_{K-1})$ by minimizing the norm (\ref{norm}).
	\item For each trajectory update the value function using the formula:
	\begin{equation}
		V^{*k}_{t_{i-1}} = \begin{cases}
			G(S^k_{t_i})\quad \text{if}\ G(S^k_{t_i}) \geq  \I_{\{ t_i < \theta \}} \sum_{j=0}^{M} \alpha^*_j(t_i) \phi_j(S^k_{t_i})\\
			e^{-r(t_{i+1}-t_i)}V^{*k}_{t_{i}}\quad \text{otherwise.}
		\end{cases}
	\end{equation}
	\item repeat steps (3) and (4) until $\boldsymbol{V}_0$ is reached.
	\item Calculate the price of the option by taking the average of $\boldsymbol{V}_0$ vector.
\end{enumerate}

\section{Algorithm convergence}\label{sec:conv}

We prove that the modified algorithm converges to a true value.
We will use similar method as the one 
proposed by \cite{proof}.
We formulate the pricing LSMC procedure from more general perspective of the optimal stopping with a random time-cap.
More precisely,
we introduce a capped asset price via
\begin{equation}\label{stopped_S}
	\overline{S_t} = S_{t\wedge\theta}
\end{equation}
and define:
\begin{equation}
	\overline{Z}_{t_j} = e^{-r{t_j}} G(\overline{S}_{t_j}).
\end{equation}
Observe, that
\begin{equation}
	\sup_{\tau \in \mathcal{T}_L, \tau \leq T} \E \overline{Z}_\tau = \sup_{\tau \in \mathcal{T}_L, \tau \leq T} \E[e^{-r{\tau\wedge\theta}} G(S_{\tau\wedge\theta})],
\end{equation}
where $\mathcal{T}_L$ is the set of stopping times with values in moments $t_j$.

In other words, our pricing problem can re-formulated as finding
\begin{equation}
	V = \sup_{\tau \in \mathcal{T}_L, \tau \leq T} \E \overline{Z}_\tau,
\end{equation}
where $(\overline{Z}_{t_j})_{0 \leq j \leq L}$ is a sequence of square integrable random variables and $\mathcal{T}$ is a set of all possible stopping times.
Additionally, we assume in more general set-up that
\begin{equation}
	\overline{Z}_{t_j} = f(t_j, \overline{S}_{t_j})
\end{equation}
for some Borel function $f(t, s)$ that is non-increasing with respect of $t$.

We introduce a Snell envelope
\begin{equation}
	\overline{U}_{t_j} = \text{ess-}\!\!\!\!\! \sup_{\tau \in \mathcal{T}_L, \tau \leq T} \E (\overline{Z}_\tau | \F_{t_j}), \quad j = 0, \ldots, L,
\end{equation}
to formulate the dynamic programming principle
\begin{equation}
	\begin{cases}
		\overline{U}_{t_L} = \overline{Z}_{t_L} \\
		\overline{U}_{t_j} = \max (\overline{Z}_{t_j}, \E (\overline{U}_{t_{j+1}} | \F_{t_j})), \quad 0 \leq j \leq L-1.
	\end{cases}
\end{equation}
Let
\begin{equation}
	\overline{\tau}_j = \min \{ k \geq j | \overline{U}_{t_k} = \overline{Z}_{t_k} \}.
\end{equation}
Note, that $\overline{U}_{t_j} = \E [ \overline{Z}_{\overline{\tau}_j} | \mathcal{F}_{t_j}]$ and $\E \overline{U}_0 = \sup_{\tau \in \{ \overline{\tau}_0, \ldots, \overline{\tau}_L \}} \E \overline{Z}_{\tau}  = \E \overline{Z}_{\overline{\tau}_0}$. Then
above dynamic programming principle can be rewritten as follows
\begin{equation}
	\begin{cases}
		\overline{\tau}_L = t_L \\
		\overline{\tau}_j =t_j\I_{\{ \overline{Z}_{t_j} \geq \E[\overline{Z}_{\overline{\tau}_{j+1}} | \F_{t_j}] \}} + \overline{\tau}_{j+1}\I_{\{ \overline{Z}_{t_j} < \E[\overline{Z}_{\overline{\tau}_{j+1}} | \F_{t_j}] \}}, \quad 0 \leq j \leq L-1.
	\end{cases}
\end{equation}
Procedure of identifying the optimal stopping time based on $\overline{\tau}_j$
will be used in our numerical calculations. However,
to do so we have to make further adjustments.

To see that some modifications are required,  observe that for $t_j \geq \theta$ clearly {$\E[\overline{Z}_{\overline{\tau}_{j+1}} | \F_{t_j}] = \E[ f(\overline{\tau}_{j+1}, S_\theta) | \F_{t_j}] \leq \E[ f(t_j, S_\theta) | \F_{t_j}] = \E[\overline{Z}_{t_j} | \F_{t_j}] = \overline{Z}_{t_j}$},
which means that $\overline{\tau}_j = {t_j}$. However, if we apply the direct approximation in the numerical algorithm:
\begin{equation}
	\E[\overline{Z}_{t_{j+1}} | \F_{t_j}] = \E [e^{-r(t_{j+1}-t_j)}V_{t_{j+1}} | \F_{t_j}] \approx \sum_{k=0}^{M-1}\alpha^*_k\phi_k (\overline{S}_{t_j}),
\end{equation}
we might get $\E[\overline{Z}_{\overline{\tau}_{j+1}} | \F_{t_j}] > \overline{Z}_{t_j}$ as coefficients $\alpha^*_k$ are calculated based on all trajectories, also the ones not stopped up to the moment $t_j$.
For this reason, we introduce

\begin{equation}\label{tau}
	\begin{cases}
		\tau_L = t_L \\
		\tau_{j} ={t_j}\I_{ \{Z_{t_j}\Ij \geq \E[Z_{\tau_{j+1}} | \F_{t_j}]\Ij\} } + \tau_{j+1}\I_{\{ Z_{t_j}\Ij < \E[Z_{\tau_{j+1}}\Ij | \F_{t_j}] \}}, \quad 0 \leq j \leq L-1.
	\end{cases}
\end{equation}

We start from the first key lemma.
\begin{Lemma}\label{lemma1}
For all $0 \leq j \leq L$ we have
\begin{equation*}
	\overline{\tau}_j = \tau_j.
\end{equation*}
\end{Lemma}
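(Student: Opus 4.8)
The plan is to prove the equality $\overline\tau_j = \tau_j$ by downward induction on $j$, from $j=L$ to $j=0$. The base case $j=L$ is immediate since both are defined to be $t_L$. For the inductive step, assume $\overline\tau_{j+1} = \tau_{j+1}$; I want to show the two defining recursions in~(\ref{tau}) and in the displayed recursion for $\overline\tau_j$ produce the same random variable. Since, given the inductive hypothesis, the ``continuation'' branches $\overline\tau_{j+1}$ and $\tau_{j+1}$ already agree, it suffices to show that the two indicator events coincide (up to a $\P$-null set): namely that
\begin{equation*}
	\{ \overline Z_{t_j} \ge \E[\overline Z_{\overline\tau_{j+1}} \mid \F_{t_j}] \}
	=
	\{ Z_{t_j}\Ij \ge \E[Z_{\tau_{j+1}}\Ij \mid \F_{t_j}] \}
\end{equation*}
almost surely, where I use $\overline Z_{t_k} = Z_{t_k}\I_{\{t_k < \theta\}} + Z_\theta \I_{\{t_k \ge \theta\}}$ style bookkeeping (i.e. $\overline Z_{t_k}=e^{-rt_k}G(\overline S_{t_k})$ and on $\{t_k\ge\theta\}$ it equals $e^{-rt_k}G(S_\theta)$).

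I would split on the two complementary $\F_{t_j}$-measurable events $\{t_j < \theta\}$ and $\{t_j \ge \theta\}$ and argue on each. On $\{t_j < \theta\}$ the indicator $\Ij$ equals $1$, so the right-hand event reduces to $\{Z_{t_j} \ge \E[Z_{\tau_{j+1}}\Ij \mid \F_{t_j}]\}$; here one must reconcile $Z_{t_j}$ with $\overline Z_{t_j}$ (they agree on this event since $\overline S_{t_j}=S_{t_j}$) and reconcile $\E[Z_{\tau_{j+1}}\Ij\mid\F_{t_j}]$ with $\E[\overline Z_{\overline\tau_{j+1}}\mid\F_{t_j}]$ — this is where one uses that, starting from $t_j<\theta$, the capped and uncapped payoffs evolve identically until $\theta$ is reached, together with the inductive identification of the stopping rules, and that the $\Ij$ multiplying the conditional expectation can be pulled inside because it is $\F_{t_j}$-measurable. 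On the complementary event $\{t_j \ge \theta\}$, I would invoke exactly the monotonicity computation already spelled out in the paragraph preceding the lemma: $\E[\overline Z_{\overline\tau_{j+1}}\mid\F_{t_j}] = \E[f(\overline\tau_{j+1},S_\theta)\mid\F_{t_j}] \le \E[f(t_j,S_\theta)\mid\F_{t_j}] = \overline Z_{t_j}$, using that $f(t,s)$ is non-increasing in $t$ and $\overline\tau_{j+1}\ge t_{j+1}>t_j$; hence the left-hand event holds surely on $\{t_j\ge\theta\}$. Simultaneously, on $\{t_j\ge\theta\}$ the right-hand event is $\{0 \ge 0\}$ (both sides carry the factor $\Ij = 0$), which also holds surely. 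So both indicator events contain $\{t_j\ge\theta\}$, and on $\{t_j<\theta\}$ they coincide by the previous step; therefore the events are a.s.\ equal, and the recursions agree, completing the induction.

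The main obstacle I anticipate is the careful bookkeeping on $\{t_j < \theta\}$ showing $\E[Z_{\tau_{j+1}}\Ij \mid \F_{t_j}] = \E[\overline Z_{\overline\tau_{j+1}} \mid \F_{t_j}]$ on that event: one has to check that replacing $Z$ by $\overline Z$ and $\tau_{j+1}$ by $\overline\tau_{j+1}$ under the conditional expectation is legitimate even though, on the future path, $\theta$ may occur strictly after $t_j$ but before $\tau_{j+1}$. The clean way is to note that conditionally on $\{t_j<\theta\}$ the two value processes $Z_{t_k}\Ij$ (for $k\ge j$) and $\overline Z_{t_k}$ satisfy the same recursion from the inductive hypothesis — equivalently, one can first establish by a parallel downward induction that $Z_{\tau_k}\I_{\{t_j<\theta\}} = \overline Z_{\overline\tau_k}\I_{\{t_j<\theta\}}$ for all $k\ge j$ — so that the identity of conditional expectations is automatic. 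Everything else (measurability of the indicator events, the $j=L$ base case, the monotonicity inequality) is routine given the assumptions already in force, in particular that $f(t,s)$ is non-increasing in $t$ and that $\theta$ is an $\{\F_t\}$-stopping time.
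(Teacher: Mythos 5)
Your proposal is correct and follows essentially the same route as the paper: the base case at $t_L$, the monotonicity-of-$f$ argument showing both indicator events hold trivially on $\{t_j\ge\theta\}$ (where both sides of the $\tau_j$-recursion carry the factor $\I_{\{t_j<\theta\}}=0$), and a backward induction on $\{t_j<\theta\}$ identifying $\overline Z$ with $Z$ and $\overline\tau_{j+1}$ with $\tau_{j+1}$. The bookkeeping point you flag at the end is precisely the step the paper also relies on (via its chain of equalities replacing $\overline S_{\overline\tau_{j+1}}$ by $S_{\tau_{j+1}}$), so the two arguments match in substance.
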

\begin{proof}
If $t_j = t_L=T$
then $\overline{\tau}_j = T = \tau_j$.

Let us consider now the case when
$ \theta \leq t_j < T$.
Then
\begin{align}
	\overline{\tau}_j &= t_j\I_{\{ \overline{Z}_{t_j} \geq \E[\overline{Z}_{\overline{\tau}_{j+1}} | \F_{t_j}] \}} + \overline{\tau}_{j+1}\I_{\{ \overline{Z}_{t_j} < \E[\overline{Z}_{\overline{\tau}_{j+1}} | \F_{t_j}] \}}  \\
	& = t_j\I_{\{ f(t_j,\overline{S}_j) \geq \E[f(\overline{\tau}_{j+1},\overline{S}_{\overline{\tau}_{j+1}}) | \F_{t_j}] \}} + \overline{\tau}_{j+1}\I_{\{ f(t_j,\overline{S}_j) < \E[f(\overline{\tau}_{j+1},\overline{S}_{\overline{\tau}_{j+1}}) | \F_{t_j}] \}}  \nonumber \\
	& = t_j\I_{\{ f(t_j,S_\theta) \geq \E[f(\overline{\tau}_{j+1},S_\theta) | \F_{t_j}] \}} + \overline{\tau}_{j+1}\I_{\{ f(t_j,S_\theta) < \E[f(\overline{\tau}_{j+1},S_\theta) | \F_{t_j}] \}}. \nonumber
\end{align}

By the fact that $\overline{\tau}_{j+1} > t_j$ and by monotonicity of the function $f$ with respect of time, we have $f(t_j,S_\theta) \geq \E[f(\overline{\tau}_{j+1},S_\theta) | \F_{t_j}]$. In consequence, $\overline{\tau}_j = t_j$.
Therefore
\begin{align}
	\tau_j &= t_j\I_{ Z_{t_j}\Ij \geq \E[Z_{\tau_{j+1}} | \F_{t_j}]\Ij } + \tau_{j+1}\I_{ Z_{t_j}\Ij < \E[Z_{\tau_{j+1}}\Ij | \F_{t_j}] }  \\
	&= t_j\I_{ \{ 0 \geq 0\}} + \tau_{j+1}\I_{ \{ 0 < 0\}} = t_j = \overline{\tau}_j. \nonumber
\end{align}

We consider now the remaining case of
$0 \leq t_j < \theta$.
We prove this part by backward mathematical induction.
We recall that from previous considerations we know that
$\overline{\tau}_j=\tau_j$
for  $j$ such that $t_j = \theta$.
Assuming that $\overline{\tau}_{j+1}=\tau_{j+1}$, we have
\begin{align}
	\overline{\tau}_j &= j\I_{\{ \overline{Z}_{t_j} \geq \E[\overline{Z}_{\overline{\tau}_{j+1}} | \F_{t_j}] \}} + \overline{\tau}_{j+1}\I_{\{ \overline{Z}_{t_j} < \E[\overline{Z}_{\overline{\tau}_{j+1}} | \F_{t_j}] \} } \\
	& = j\I_{\{ f(j,\overline{S}_j) \geq \E[f(\overline{\tau}_{j+1},\overline{S}_{\overline{\tau}_{j+1}}) | \F_{t_j}] \}} + \overline{\tau}_{j+1}\I_{\{ f(j,\overline{S}_j) < \E[f(\overline{\tau}_{j+1},\overline{S}_{\overline{\tau}_{j+1}}) | \F_{t_j}] \} } \nonumber \\
	& = j\I_{\{ f(j,S_{t_j}) \geq \E[f(\tau_{j+1},S_{\tau_{j+1}}) | \F_{t_j}] \}} + \tau_{j+1}\I_{\{ f(j,S_{t_j}) < \E[f(\tau_{j+1},S_{\tau_{j+1}}) | \F_{t_j}] \} } \nonumber \\
	&= j\I_{\{ Z_{t_j} \geq \E[Z_{\tau_{j+1}} | \F_{t_j}] \}} + \tau_{j+1}\I_{\{ Z_{t_j} < \E[Z_{\tau_{j+1}} | \F_{t_j}] \} }  \nonumber \\
	&= j\I_{\{ Z_{t_j}\Ij \geq \E[Z_{\tau_{j+1}} | \F_{t_j}]\Ij \}} + \tau_{j+1}\I_{\{ Z_{t_j}\Ij < \E[Z_{\tau_{j+1}}\Ij | \F_{t_j}] \}} = \tau_j. \nonumber
\end{align}
This completes the proof.
\end{proof}

From Lemma \ref{lemma1} we can conclude that
\begin{equation*}
	V = \sup_{\tau \in \{ \tau_0, \ldots, \tau_L \}} \E Z_\tau,
\end{equation*}
where
\begin{equation*}
	Z_{t_j} = f(t_j, S_{t_j}).
\end{equation*}

Denoting
\begin{equation}\label{U}
	U_j = \E (Z_{\tau_j} | \F_{t_j})
\end{equation}
we have

\begin{equation}\label{EU=V}
	\E {U}_0 = \sup_{\tau \in \{ \tau_0, \ldots, \tau_L \}} \E Z_{\tau} = V.
\end{equation}
Now, given the approximation $\E [e^{-r(t_{j+1}-t_j)}V_{t_{j+1}} | \F_{t_j}] \approx \sum_{k=0}^{M-1}\alpha^*_k\phi_k (S_{t_j}) = \alpha_j^M \cdot \phi^M(S_{t_j})$ for a~chosen value of $M$ and corresponding vectors $\alpha_j^M = \{\alpha^*_0,\ldots, \alpha^*_{M-1}\}$ (with $\{\alpha^*_0,\ldots, \alpha^*_{M-1}\}$ estimated separately for each $j$ and $\phi^M = \{\phi_0,\ldots, \phi_{M-1}\}$, let us introduce stopping times $\tau_j^M$ as:

\begin{equation}\label{tauM}
	\begin{cases}
		\tau_L^M = t_L \\
		\tau_{j}^M ={t_j}\I_{\{ Z_{t_j}\Ij \geq [\alpha_j^M \cdot \phi^M(S_{t_j})]\Ij\} } + \tau_{j+1}^M\I_{ \{Z_{t_j}\Ij < [\alpha_j^M \cdot \phi^M(S_{t_j})]\Ij \}}, \quad 0 \leq j \leq L-1.
	\end{cases}
\end{equation}
Such stopping times allow us to define an approximation of the value by 
\begin{equation}
	U_0^M = \max (Z_0, \E Z_{\tau_1^M}).
\end{equation}
%With (\ref{EU=V}),
As $Z_0$ is deterministic, it is sufficient to find $\E Z_{\tau_1^M}$ to be able to obtain the approximation of the option price. For this purpose, Monte Carlo method is used.
Let $N$ denote the total number of simulated trajectories. Then, the approximation of stopping times $\tau_j$ for $n$-th trajectory is obtained by using only the first $M$  basis functions $\phi_k$.
In other words, $\tau_j$ are approximated by
\begin{equation}\label{tauM_N}
	\begin{cases}
		\tau_L^{M,N,n} = t_L \\
		\tau_{j}^{M,N,n} ={t_j}\I_{\{ Z^n_{t_j}\Ij \geq [\alpha_j^{M,N} \cdot \phi^M(S_{t_j}^n)]\Ij\} } + \tau_{j+1}^{M,N,n}\I_{\{ Z^n_{t_j}\Ij < [\alpha_j^{M,N} \cdot \phi^M(S_{t_j}^n)]\Ij \}}, \quad 0 \leq j \leq L-1,
	\end{cases}
\end{equation}
where $S^n_{t_j}$ is the value of $n$-th trajectory of the underlying asset process, $Z^n_{t_j} = f(t_j,S^n_{t_j})$ and $\alpha_j^{M,N}$ is a set of first $M$ estimators $\alpha^*_k$ of coefficients $\alpha_k$ in equation (\ref{lin_comb}) evaluated for $n$-th trajectory at time $t_j$. In consequence:
\begin{equation}
	U_0^{M,N} = \max (Z_0, \frac{1}{N}\sum_{n=1}^{N} Z^n_{\tau_1^{M,N,n}}).
\end{equation}
is a LSMC estimator of the value $V$.

It appears that our estimator is consistent in the following sense.
In consequence, our algorithm allows to obtain a proper approximation of the capped option price. 
\begin{Thm}\label{thm1}
Assume that 
for $0 < j < L$ we have $\P(\alpha_j \cdot \phi(S_{t_j}) = Z_{t_j}) = 0$. Then
\begin{equation}\label{zbN}
	U_0^{M,N} \xrightarrow[N \to \infty]{{\rm a.s.}} U_0^M
\end{equation}
and
\begin{equation}\label{zbM}
\lim_{M\to+\infty}U_0^M =U_0=V.
\end{equation}
\end{Thm}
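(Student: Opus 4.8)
The proof follows the scheme of \cite{proof}, treating the two limits separately. By Lemma \ref{lemma1} and \eqref{EU=V} we have $V=\E Z_{\tau_0}=\max(Z_0,\E Z_{\tau_1})=U_0$, and since $Z_0$ is deterministic it suffices to prove that, for each fixed $M$, $\frac1N\sum_{n=1}^{N}Z^n_{\tau_1^{M,N,n}}\to\E Z_{\tau_1^M}$ almost surely as $N\to\infty$, and that $\E Z_{\tau_1^M}\to\E Z_{\tau_1}$ as $M\to\infty$. The basic observation used throughout is that each of $\tau_j$, $\tau_j^M$, $\tau_j^{M,N,n}$ is an explicit composition of the indicator functions in its defining recursion \eqref{tau}, \eqref{tauM}, \eqref{tauM_N}; in particular $\tau_{j+1}^{M,N,n}$ depends on the whole sample only through the coefficient vectors $\alpha_{j+1}^{M,N},\dots,\alpha_{L-1}^{M,N}$, and, viewed as a function of these vectors with the $n$-th trajectory frozen, it is continuous off the event that $Z^n_{t_k}\I_{\{t_k<\theta^n\}}$ equals $(\alpha_k^M\cdot\phi^M(S^n_{t_k}))\I_{\{t_k<\theta^n\}}$ for some $j<k<L$, an event of probability zero under the non-degeneracy hypothesis.

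For \eqref{zbN}, fix $M$ and induct backwards on $j$, proving $\alpha_j^{M,N}\xrightarrow[N\to\infty]{{\rm a.s.}}\alpha_j^M$ and $\frac1N\sum_{n=1}^N Z^n_{\tau_j^{M,N,n}}\xrightarrow[N\to\infty]{{\rm a.s.}}\E Z_{\tau_j^M}$, the case $j=L$ being immediate. For the inductive step, $\alpha_j^{M,N}$ solves the empirical normal equations \eqref{matrix}: the empirical Gram matrix converges almost surely, by the strong law of large numbers, to $\bigl(\E[\phi_k(S_{t_j})\phi_\ell(S_{t_j})]\bigr)_{0\le k,\ell<M}$, which is invertible by basis property~(2), while the right-hand side is a sample average of $\phi^M(S^n_{t_j})\,Z^n_{\tau_{j+1}^{M,N,n}}$. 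Combining the continuity property above with the induction hypothesis $\alpha_{j+1}^{M,N}\to\alpha_{j+1}^M$ and the strong law of large numbers for data-dependent summands of \cite{proof}, this average converges almost surely to $\E[\phi^M(S_{t_j})\,Z_{\tau_{j+1}^M}]$; hence $\alpha_j^{M,N}\to\alpha_j^M$, and the same argument then gives $\frac1N\sum_n Z^n_{\tau_j^{M,N,n}}\to\E Z_{\tau_j^M}$. Taking $j=1$ yields \eqref{zbN}.

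For \eqref{zbM}, induct backwards on $j$ to show $Z_{\tau_j^M}\to Z_{\tau_j}$ in $L^2$ as $M\to\infty$, equivalently $\P(\tau_j^M\ne\tau_j)\to0$, the case $j=L$ being trivial. For the step, $\alpha_j^M\cdot\phi^M(S_{t_j})$ is the orthogonal projection in $L^2$ of $\E[Z_{\tau_{j+1}^M}\mid\F_{t_j}]$ onto $\mathrm{span}\{\phi_0,\dots,\phi_{M-1}\}$. As orthogonal projections are contractions and these spans have union dense in $L^2(\sigma(S_{t_j}))$ by basis property~(1), the induction hypothesis (which controls the $L^2$-variation of the projected vector) together with convergence of the projections gives $\alpha_j^M\cdot\phi^M(S_{t_j})\to\alpha_j\cdot\phi(S_{t_j})$ in $L^2$, where $\alpha_j\cdot\phi(S_{t_j})$ is the object appearing in \eqref{ind_comb}. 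Passing to a subsequence along which this convergence holds almost surely and using $\P(\alpha_j\cdot\phi(S_{t_j})=Z_{t_j})=0$, the indicators in \eqref{tauM} converge almost surely to those in \eqref{tau}, so $\tau_j^M\to\tau_j$ along that subsequence; since every subsequence of $\bigl(\P(\tau_j^M\ne\tau_j)\bigr)_M$ has such a further subsequence, the whole sequence tends to $0$, and square-integrability of $(Z_{t_k})_k$ upgrades this to $L^2$ convergence of $Z_{\tau_j^M}$. With $j=1$ we get $\E Z_{\tau_1^M}\to\E Z_{\tau_1}$, hence $U_0^M=\max(Z_0,\E Z_{\tau_1^M})\to\max(Z_0,\E Z_{\tau_1})=U_0=V$.

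The step I expect to be the main obstacle is the inductive passage to the limit in \eqref{zbN}: the Monte Carlo average $\frac1N\sum_n\phi^M(S^n_{t_j})Z^n_{\tau_{j+1}^{M,N,n}}$ is not governed by a plain strong law of large numbers, because each summand depends on all $N$ trajectories through the estimated coefficients $\alpha^{M,N}$, so one must show that substituting the (unknown) population coefficients $\alpha^M$ changes the average by an asymptotically negligible amount. This is precisely where the non-degeneracy hypothesis enters — to exclude mass on the limiting indifference boundary where the exercise-decision indicators are discontinuous — and the convenient way to organise the argument is the continuity-under-SLLN lemma of \cite{proof}. A secondary, lesser difficulty, also present in \cite{proof}, arises in \eqref{zbM}: the vector being projected moves with $M$, so one cannot merely invoke convergence of projections onto an increasing family of subspaces, and the contraction property of orthogonal projections is what reconciles the two sources of variation.
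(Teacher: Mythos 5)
Your argument is correct and follows the same overall architecture as the paper's proof: \eqref{zbN} is obtained by adapting Theorem 3.2 of \cite{proof} (backward induction in $j$, convergence of the empirical Gram matrix and of the data-dependent Monte Carlo averages, with the non-degeneracy hypothesis controlling the discontinuity of the exercise indicators), and \eqref{zbM} is proved by backward induction using the contraction property of the $L^2$-projections onto $\mathrm{span}\{\phi_0,\dots,\phi_{M-1}\}$ together with totality of the basis, as in Theorem 3.1 of \cite{proof}. The one place where you genuinely diverge is the inductive step for \eqref{zbM}: you pass through $\P(\tau_j^M\ne\tau_j)\to0$, which forces you to prove convergence of the exercise indicators themselves and hence to invoke the hypothesis $\P(\alpha_j\cdot\phi(S_{t_j})=Z_{t_j})=0$ a second time, via an a.s.-convergent subsequence. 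The paper instead decomposes $\E[Z_{\tau_j^M}-Z_{\tau_j}\mid\F_{t_j}]$ into the continuation term (handled by the induction hypothesis) plus
\begin{equation*}
B_j^M=\bigl(Z_{t_j}-\E[Z_{\tau_{j+1}}\mid\F_{t_j}]\bigr)\Bigl(\I_{\{Z_{t_j}\Ij\ge[\alpha_j^M\cdot\phi^M(S_{t_j})]\Ij\}}-\I_{\{Z_{t_j}\Ij\ge\E[Z_{\tau_{j+1}}\mid\F_{t_j}]\Ij\}}\Bigr),
\end{equation*}
and uses the deterministic observation that the two indicators can differ only when $Z_{t_j}$ lies between the two thresholds, which yields $|B_j^M|\le|\alpha_j^M\cdot\phi^M(S_{t_j})-\E[Z_{\tau_{j+1}}\mid\F_{t_j}]|$ pointwise. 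This sidesteps any discussion of whether the indicators converge, so the non-degeneracy hypothesis is not needed for \eqref{zbM} at all (it enters only through the citation of \cite{proof} for \eqref{zbN}); your route uses the hypothesis twice but proves the slightly stronger statement that the stopping times converge in probability, whereas the paper's bound is quantitative and only yields convergence of the (conditional) values. Both arguments are sound, and both ultimately rest on the same projection estimate $\|\alpha_j^M\cdot\phi^M(S_{t_j})-\E[Z_{\tau_{j+1}}\mid\F_{t_j}]\|_{L^2}\to0$.
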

\begin{proof}
The convergence \eqref{zbN} can be proved in a similar way as Theorem 3.2 of \cite{proof}.
Indeed, our model fulfills all of the assumptions stated in \cite{proof} and in our modified algorithm we do not change the way of approximating $\alpha_j^M$.

From \eqref{U} if follows that to prove \eqref{zbM} it sufficient to show that for all $0 \leq j \leq L$ we have:
\begin{equation}\label{T1}
	\lim\limits_{M \to \infty} \E [ Z_{\tau_j^M} | \F_{t_j}] = \E [Z_{\tau_j} | F_j].
\end{equation}

We will prove this fact using backward mathematical induction.
Equation (\ref{T1}) holds for $L$ since $Z_{\tau_L^M} = Z_{\tau_L} = Z_{T}$. Assume that (\ref{T1}) holds for $j+1$. To prove that it holds for $j$,
observe, that
\begin{equation}
	Z_{\tau_j^M} = Z_{t_j}\I_{\{ Z_{t_j}\Ij \geq [\alpha_j^M \cdot \phi^M(S_{t_j})]\Ij\} } + 	Z_{\tau_{j+1}^M}\I_{\{ Z_{t_j}\Ij < [\alpha_j^M \cdot \phi^M(S_{t_j})]\Ij\} }.
\end{equation}
Furthermore,
\begin{align}
	&\E [Z_{\tau_j^M} - Z_{\tau_j} | \F_{t_j}] =  	\E [Z_{\tau_{j+1}^M} - Z_{\tau_{j+1}} | \F_{t_j}]\I_{ \{Z_{t_j}\Ij < [\alpha_j^M \cdot \phi^M(S_{t_j})]\Ij\} } \\
	&\quad + (Z_{t_j} - \E[ Z_{\tau_{j+1}} |\F_{t_j}])\left( \I_{\{ Z_{t_j}\Ij \geq [\alpha_j^M \cdot \phi^M(S_{t_j})]\Ij\} } - \I_{\{ Z_{t_j}\Ij \geq \E[Z_{\tau_{j+1}} |\F_{t_j}]\Ij\} } \right). \nonumber
\end{align}

By induction assumption, $\E [Z_{\tau_{j+1}^M} - Z_{\tau_{j+1}} | \F_{t_j}]$ converges to 0 as $M$ goes to infinity. Let us define $B_j^M$ as the remaining part on the right hand side of above equality, that is,
\begin{equation}
	B_j^M = (Z_{t_j} - \E[ Z_{\tau_{j+1}} |\F_{t_j}])\left( \I_{ \{Z_{t_j}\Ij \geq [\alpha_j^M \cdot \phi^M(S_{t_j})]\Ij\} } - \I_{\{ Z_{t_j}\Ij \geq \E[Z_{\tau_{j+1}} |\F_{t_j}]\Ij\} } \right).
\end{equation}

We have
\begin{align*}
	|B_j^M| &= |Z_{t_j} - \E[ Z_{\tau_{j+1}} |\F_{t_j}]| |\I_{\{ Z_{t_j}\Ij \geq [\alpha_j^M \cdot \phi^M(S_{t_j})]\Ij\} } - \I_{\{ Z_{t_j}\Ij \geq \E[Z_{\tau_{j+1}} |\F_{t_j}]\Ij\} }| \\
	\nonumber &= |Z_{t_j} - \E[ Z_{\tau_{j+1}} |\F_{t_j}]| |\I_{\{ \E[Z_{\tau_{j+1}} |\F_{t_j}]\Ij > Z_{t_j}\Ij \geq [\alpha_j^M \cdot \phi^M(S_{t_j})]\Ij\} } \\
	\nonumber& - \I_{\{ [\alpha_j^M \cdot \phi^M(S_{t_j})]\Ij > Z_{t_j}\Ij \geq \E[Z_{\tau_{j+1}} |\F_{t_j}]\Ij \}}| \\
	\nonumber &\leq |Z_{t_j} - \E[ Z_{\tau_{j+1}} |\F_{t_j}]|\I_{ \{| Z_{t_j} - \E[ Z_{\tau_{j+1}} |\F_{t_j}] |\Ij \leq |\alpha_j^M \cdot \phi^M(S_{t_j}) - \E[ Z_{\tau_{j+1}} |\F_{t_j}] |\Ij \}} \\
	\nonumber &\leq |Z_{t_j} - \E[ Z_{\tau_{j+1}} |\F_{t_j}]|\I_{\{ | Z_{t_j} - \E[ Z_{\tau_{j+1}} |\F_{t_j}] | \leq |\alpha_j^M \cdot \phi^M(S_{t_j}) - \E[ Z_{\tau_{j+1}} |\F_{t_j}] | \}} \\
	\nonumber &\leq  |\alpha_j^M \cdot \phi^M(S_{t_j}) - \E[ Z_{\tau_{j+1}} |\F_{t_j}]|.
\end{align*}
By arguments similar to the one given in the proof of Theorem 3.1 of \cite{proof} we can conclude that $|\alpha_j^M \cdot \phi^M(S_{t_j}) - \E[ Z_{\tau_{j+1}} |\F_{t_j}]|$ converges to $0$ as $M$ tends to infinity.
Therefore $B_j^M$ converges to $0$ as $M$ goes to infinity, which completes the proof.
\end{proof}

\section{Numerical analysis}\label{sec:num}

In a general setup, $\theta$ is a stopping time with respect to $\{\mathcal{F}_t\}_{\{t\geq 0\}}$ related to an event observable by the market participants. We allow it to be both dependent and independent of the underlying asset price. However, we will focus particularly on the case in which $\theta$ is the moment of a drawdown exceeding a fixed level. More precisely, in this paper, we focus on a first time when the underlying asset price falls below its historical maximum value by a fixed percentage threshold, that is,
\begin{equation}\label{theta}
	\theta = \inf \{ t \geq 0: 1 - S_t\overline{S}_t^{-1} \geq C \},
\end{equation}
where $C$ is the drawdown size and
\begin{equation}\label{s_max}
	\overline{S}_t = \overline{s} \vee \max_{0 \leq s \leq t} S_t
\end{equation}
is the running maximum of underlying asset price.
Above $\overline{s}$ is the historical maximum of the underlying asset price until the issued date of the option. 
Additionally, we make a following assumption: 
$$\text{\underline{The jumps in the underlying asset have the exponential distribution with mean $\rho^{-1}$.}}$$

Our goal is to find a fair price for such a contract for a chosen set of market parameters. Additionally, we want to examine how changing the parameters affects the final price.

At first, we want to compare the prices of the capped option for underlying asset described by the geometric Browian motion (GBM) and the geometric L\'evy process. We choose the following set of parameters:
$S_0 = 100,\ \overline{s} = 105,\ K = 110,\ T = 1,\ r = 0.1,\ \rho = 0.5$. For GBM we set $\sigma = 0.4$ and for the L\'evy process we set $\sigma = 0.5,\ \lambda = 0.0675$. Such a choice of parameters leads to $\mu = 0.02$ in both cases. Then, for both models we choose identical $2000$ time steps for each trajectory, $5000$ trajectories for each price calculation, and the first $5$ functions from the Laguerre basis to approximate the conditional expected value of holding the option. Additionally, we repeat the calculations for different sizes of the drawdown level $C$ and perform each calculation 200 times to draw the boxplots of the results. The prices are shown in Figures \ref{BS} and \ref{K110}. In both figures, it can be seen that with $C$ closer to 1, the prices converge to the price of the standard American option. One can also observe that the prices are higher for the L\'evy market. Indeed, keeping the parameter $\mu$ the same in both cases results in a higher volatility of the underlying asset prices described by the geometric L\'evy process. As a consequence, the prices of derivatives written on this underlying increase as well. The convergence to the stardard American option price with increasing $C$ is also visible in Figure \ref{K90}, which contains results for the L\'evy market with the same parameters as before but with $K=90$. There one can also see that for $C = 0.1$ the price is close to 0, but still positive. This behaviour is only possible due to the negative jumps of the underlying asset price process.

All of the aforementioned figures show a sigmoidal shape of the curves that shows the dependence of the price on the level of the threshold $C$. In each chart, the price of the option converges to the immediate payoff at time $t=0$ with $C \rightarrow 0$ and to the price of the corresponding vanilla American option with $C \rightarrow 1$. Additionally, for all sets of parameters, the inflection point of the price curve seems to be somewhere between $C = 0.2$ and $C = 0.3$. The shape is best reflected in Figure \ref{K90}.

As a next step, we analyse the evolution of option prices with increasing price of the underlying at time 0, see Figure \ref{smooth_fit}. Here, we choose the following set of parameters: $C = 0.2,\ K = 110,\ \sigma = 0.2,\ \lambda = 0.015,\ \rho = 0.5,\ T = 1$. Additionally, we choose the set of initial prices as $S = \{ 90, 91, \ldots, 130 \}$ and set $\overline{s} = s$ for each $s \in S$. Figure \ref{smooth_fit} suggests that we can expect a smooth fit of the payoff function and the price curve of the options capped by a drawdown event.

Finally, we decided to analyse the dependence of the option price on $r$ and $\sigma$ parameters. We take most of the parameters as in the analysis presented in Figure \ref{smooth_fit} but set $S_0 = 100,\ \overline{S}_0 = 105$ and choose $r$ ranging from $0.01$ to $0.2$ and $\sigma$ ranging from $0.02$ to $0.4$. Figure \ref{sensis} gives the capped option prices for the chosen ranges of $r$ and $\sigma$. Note that the behaviour of the prices is similar to the one of standard American and European put option prices: increasing the volatility has a positive, and increasing the interest rate a negative impact on the price.

In Table \ref{table1} we present the average computational time for generating one boxplot (created with results of 200 simulations) for each of Figures \ref{BS}-\ref{smooth_fit}. We did our calculations in MATLAB on a computer with AMD Ryzen 7 5800H CPU and 16GB of RAM memory. For efficient calculation of equation (\ref{matrix}), we used built-in \textit{linsolve} function.

\begin{table}
	\begin{tabular}{ ||c||c|c|c|c||  } 
		\hline
		Figure number & 1 & 2  & 3 & 4 \\ 
		\hline
		Avg. time [s] & 559 & 456  & 549 & 522 \\ 
		\hline
	\end{tabular}
\vspace{3mm}
\caption{Average time of generating one boxplot generated with results from 200 independent simulations (in seconds).}
\label{table1}
\end{table}

	\begin{figure}
	\includegraphics[width=\textwidth]{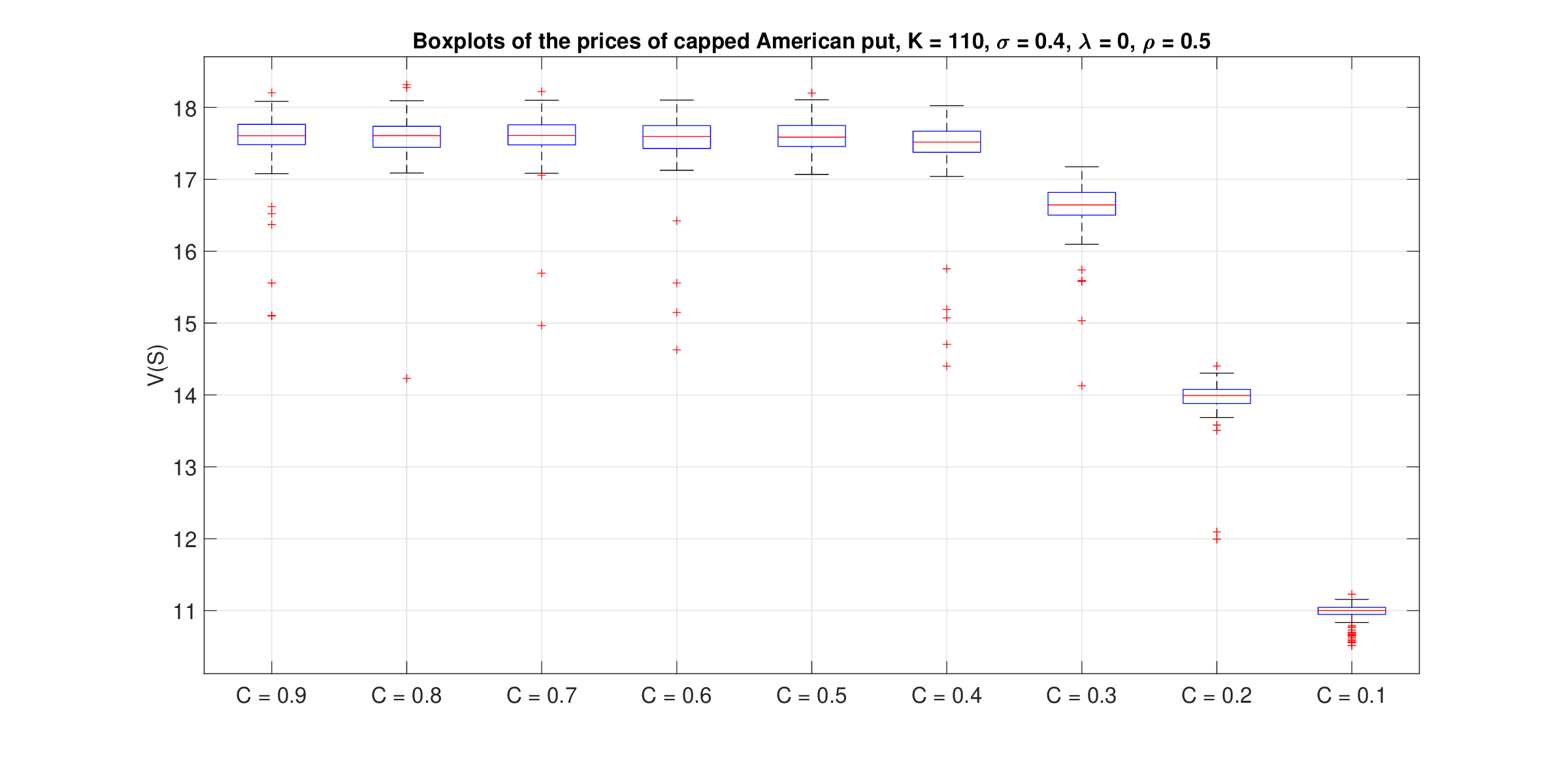}
	\caption{Boxplot of 100 prices of the capped American put option with underlying modelled by the geometric Brownian motion.}
	\label{BS}
\end{figure}

\begin{figure}
	\includegraphics[width=\textwidth]{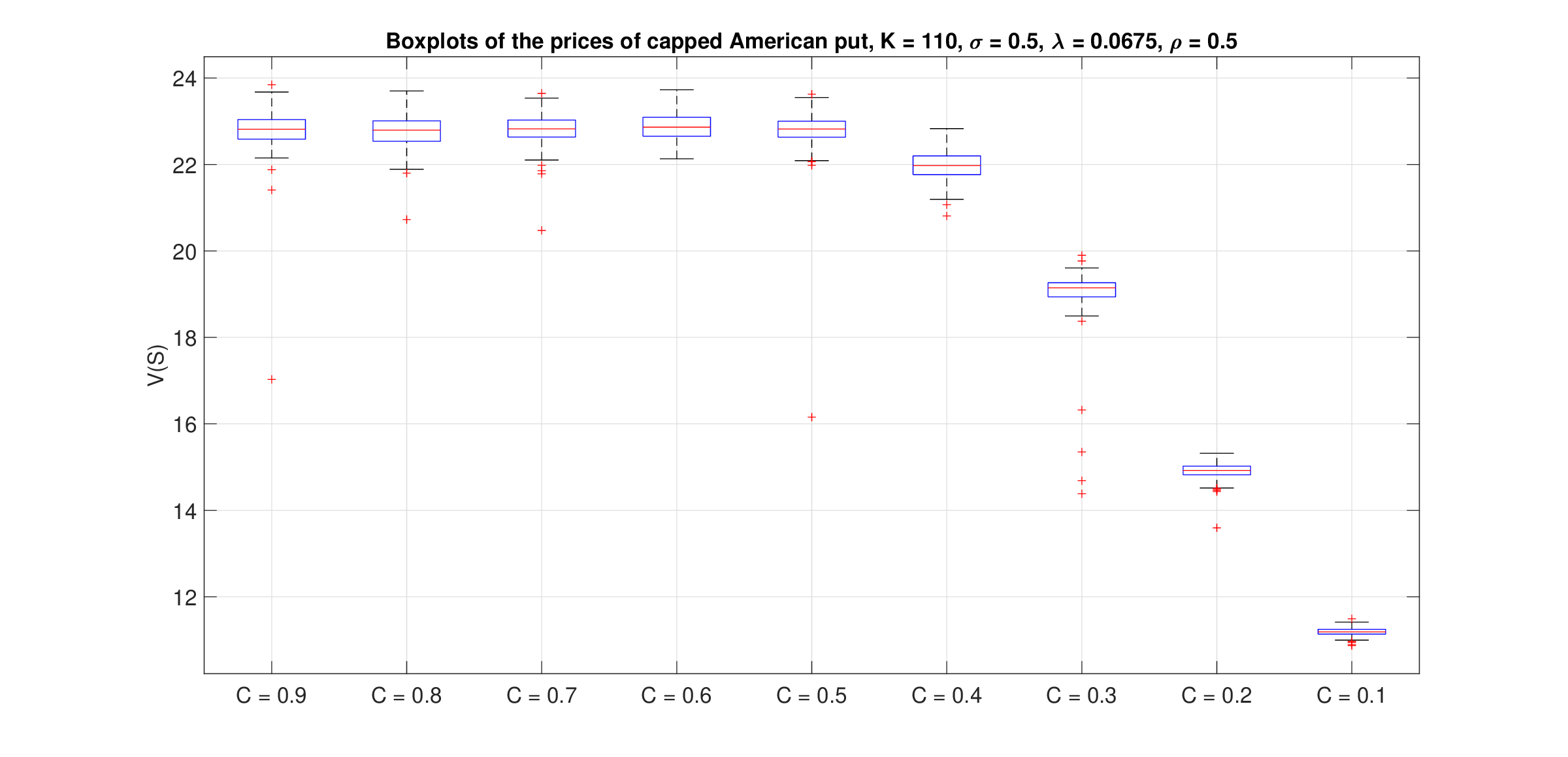}
	\caption{Boxplot of 100 prices of the capped American put option with underlying modelled by the geometric L\'evy process.}
	\label{K110}
\end{figure}

\begin{figure}
	\includegraphics[width=\textwidth]{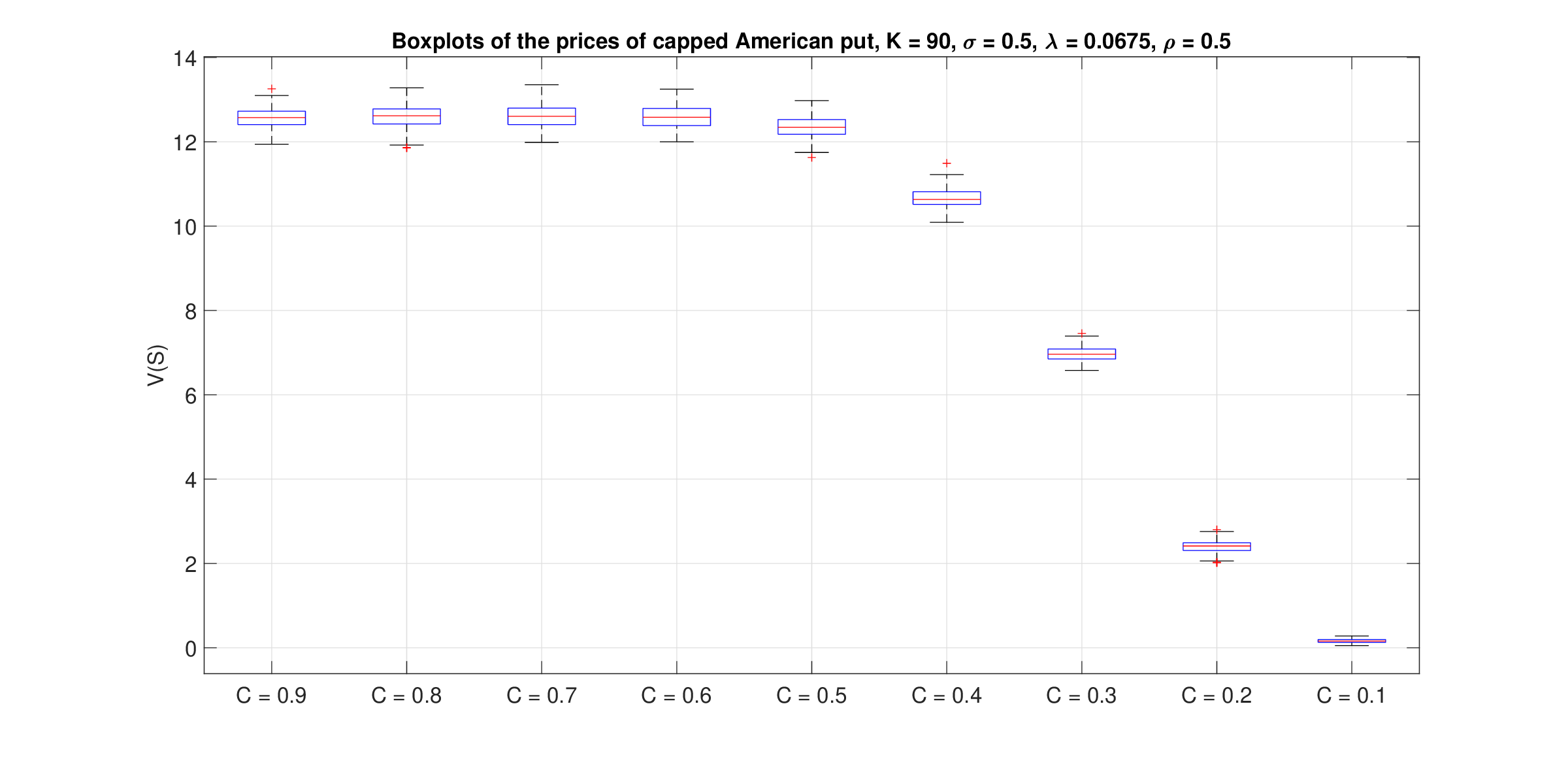}
	\caption{Boxplot of 100 prices of the capped American put option with underlying modelled by the geometric L\'evy process.}
	\label{K90}
\end{figure}

\begin{figure}
	\includegraphics[width=\textwidth]{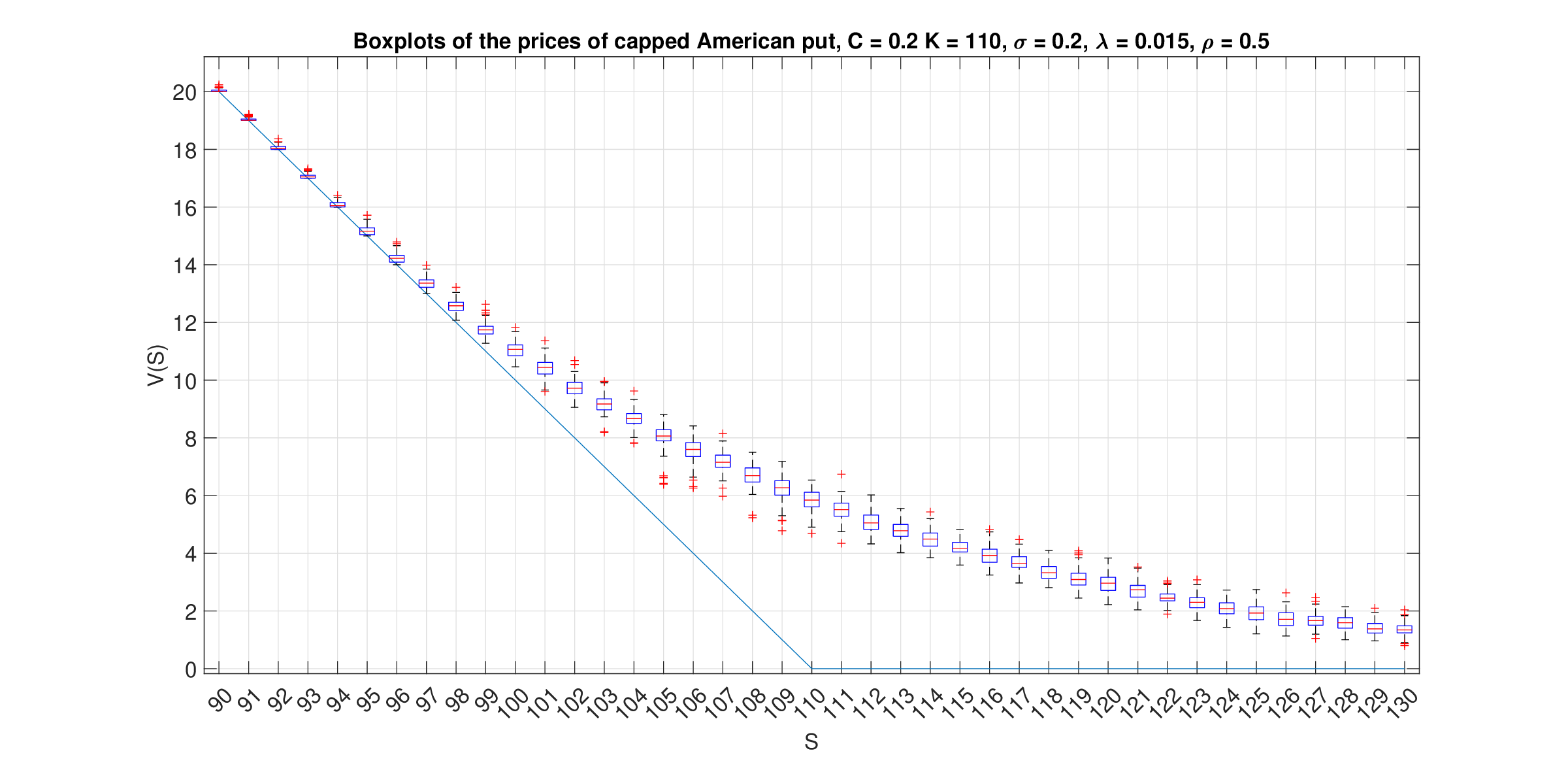}
	\caption{Smooth fit of the capped American put option prices and the payout function.}
	\label{smooth_fit}
\end{figure}

\begin{figure}
	\includegraphics[width=\textwidth]{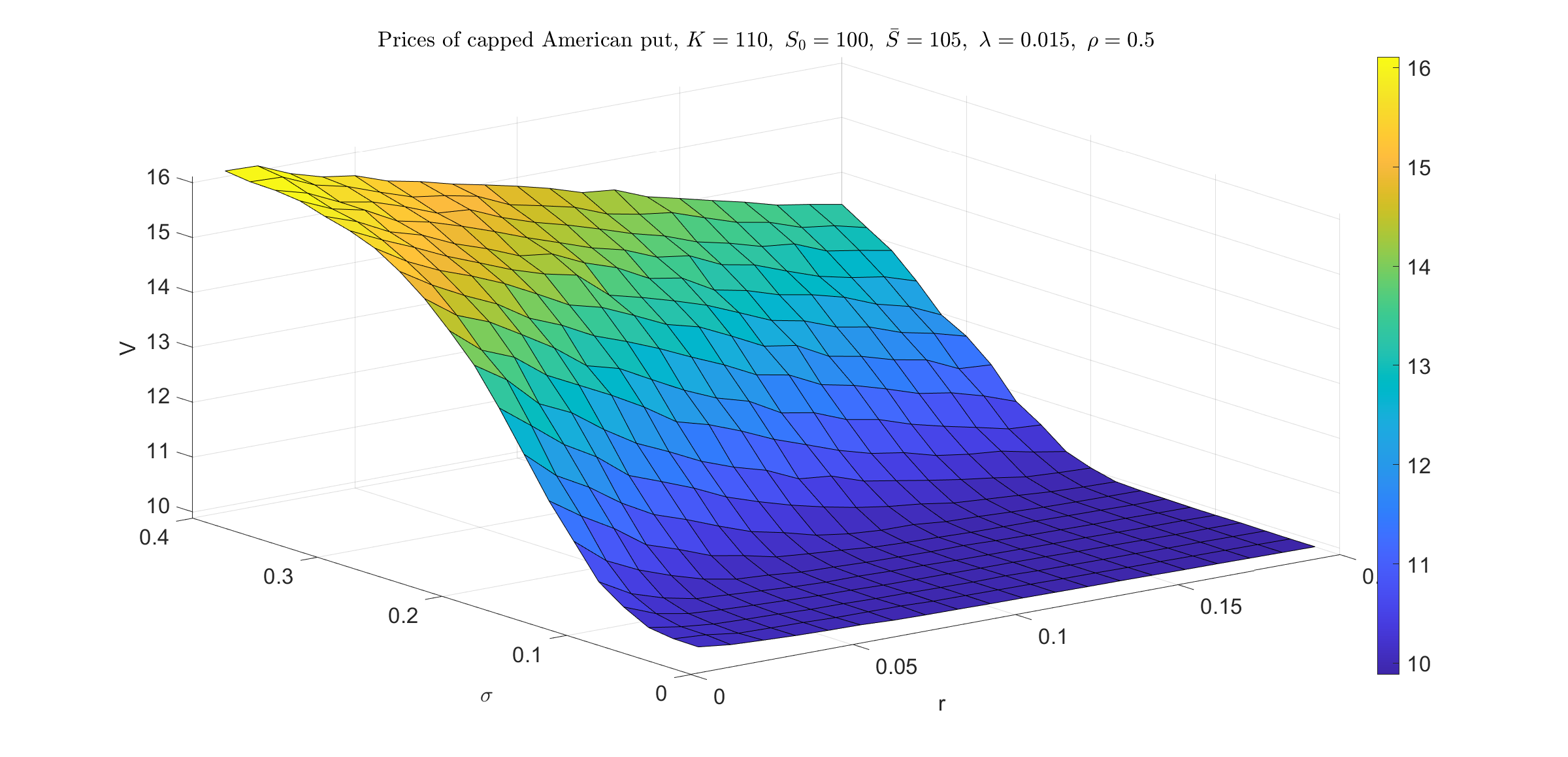}
	\caption{Price surface for the capped American put option for different values of parameters $\sigma$ and $\rho$.}
	\label{sensis}
\end{figure}

\newpage

\section*{Declarations of Interest}
The authors report no conflicts of interest. The authors alone are responsible for the content and writing of the paper.

\end{document}